\def\BibTeX{{\rm B\kern-.05em{\sc i\kern-.025em b}\kern-.08em
    T\kern-.1667em\lower.7ex\hbox{E}\kern-.125emX}}
\newcommand\numberthis{\addtocounter{equation}{1}\tag{\theequation}}
\newcommand{\rom}[1]{\uppercase\expandafter{\romannumeral #1\relax}}
\newtheorem{remark}{Remark}
\newtheorem{corollary}{Corollary}
\newtheorem{proposition}{Proposition}
\newcommand{\Ps}{{\sf P}}
\newcommand{\Ec}{\mathcal{E}}
\newcommand{\Tc}{\mathcal{T}}
\newcommand{\Sh}{{\hat{S}}}
\def\g{\gamma}
\def\s{\sigma}
\def\textiid{i.i.d.\@\xspace}
\newcommand\iid{\ifmmode\text{ i.i.d. } \else \textiid \fi}
\begin{document}

\title{Error Probability Bounds for Invariant Causal Prediction via Multiple Access Channels}

\author{\IEEEauthorblockN{Austin Goddard and Yu Xiang}
\IEEEauthorblockA{\textit{Department of Electrical and Computer Engineering} \\
\textit{University of Utah}\\
\{austin.goddard,\,yu.xiang\}@utah.edu}
\and
\IEEEauthorblockN{Ilya Soloveychik}
\IEEEauthorblockA{\textit{Department of Statistics} \\
\textit{Hebrew University of Jerusalem}\\
soloveychik.ilya@mail.huji.ac.il}
}

\maketitle

\begin{abstract}

We consider the problem of lower bounding the error probability under the invariant causal prediction (ICP) framework. To this end, we examine and draw connections between ICP and the zero-rate Gaussian multiple access channel by first proposing a variant of the original invariant prediction assumption, and then considering a special case of the Gaussian multiple access channel where a codebook is shared between an \emph{unknown} number of senders. This connection allows us to develop three types of lower bounds on the error probability, each with different assumptions and constraints, leveraging techniques for multiple access channels. The proposed bounds are evaluated with respect to existing causal discovery methods as well as a proposed heuristic method based on minimum distance decoding. 
\end{abstract}

\begin{IEEEkeywords}
Lower bounds, error probability, invariance, multiple access channels.
\end{IEEEkeywords}

\section{Introduction}
\noindent The recent invariant causal prediction (\textsf{ICP}) framework~\cite{peters2016causal} has pioneered the study of leveraging invariance in datasets across different experimental settings (or environments) for identifying potential causal predictors (with various theoretical extensions and applications~\cite{heinze2018invariant,pfister2019invariant,meinshausen2016methods,goddard2022invariance}). For linear models, the underlying assumption~\cite[Assumption~1]{peters2016causal} requires the existence of invariant (w.r.t. environments) coefficients as well as noise variables. In this work, we focus on developing lower bounds on the error probability of \textsf{ICP} by making a connection between it and the Gaussian multiple access channel (\textsf{MAC}). To do so, we make the following assumption on invariance: there exists a vector of coefficients $\g^* = [\g^*_1,\dots,\g^*_m]^\top\in \mathbb{R}^{m}$ with support $S^* \coloneqq \{i:\gamma^*_i \neq 0 \}$ that satisfies 
 \begin{equation}\label{equ:inv_ass1}
  \text{For all } e \in \Ec: Y^e = X^e\g^* + N^e, N^e \sim \mathcal{N}(0,\s^2I),
 \end{equation}
 where $\Ec$ denotes the set of different environmental conditions, $\sigma^2$ is \emph{unknown} but in a \emph{known} range $[\sigma^2_\text{min}, \sigma^2_\text{max}]$, and $X^e = [x^e_1,\dots,x^e_m] \in \mathbb{R}^{n_e\times m}$ represents a \emph{deterministic} design matrix with $n_e$ denoting the number of samples in environment $e$. We shall see (in Section~\ref{sec:bounds}) we only need $\sigma^2_{\text{min}}$ for the bounds; thus for our purpose, it is equivalent to assume that $\sigma^2$ is known. Our main assumption stated in~\eqref{equ:inv_ass1} differs from~\cite[Assumption~1]{peters2016causal}. We argue it captures the essence of the original while allowing us to leverage techniques from Gaussian \textsf{MAC} settings (see Section~\ref{sec:diff} for details).

Bounds on the probability of error for a Gaussian \textsf{MAC} with a shared codebook exist for the positive-rate case (positive-rate meaning that $m$ is exponential in $n$)~\cite{jin2011limits}. This, however, is not the main focus of this work. Rather, we focus on the zero-rate case where $m$ does not grow with $n$. To the best of our knowledge, we know of no existing lower bounds for the zero-rate and multiple sender case when the Gaussian MAC has a shared codebook and an unknown number of senders. When there is one sender, it is not necessary to consider the codebook shared and current Gaussian point-to-point channel results can be used. To this end, a bound on the probability of error for \textsf{ICP} has been previously proposed in~\cite{goddard2022lower}. This bound is lacking in two main aspects: (1) it requires the number of environments to be two, and (2) it can only handle the single sender case (i.e., $|S^*| = 1$). The bounds proposed in this work are more general in that they apply to both an arbitrary number of senders and an arbitrary number of environments.


\section{Problem Formulation} \label{sec:ICPandMAC}
We now formally describe the problem. Consider a setting in which a vector of coefficients $\g^*\in\mathbb{R}^{m}$ is generated such that its support $S^*\subseteq \{1,\dots,m\}$ is \emph{uniformly} chosen from all subsets of $\{1,\dots,m\}$. Let $w\in\mathbb{R}^{m}$ be a fixed vector where each element $w_i$ is non-zero.  The generation of $\g^* = [\g^*_1, \dots, \g^*_m]^\top$ then follows from $\g^*_{i}=w_i$ if $i \in S^*$ and $\g^*_{i}=0$ if $i \not\in S^*$, for every $i \in \{1,\dots,m\}$. The number of non-zero coefficients in $\g^*$ is referred to as $k=|S^*|$. 


Let $X^e_S$ denote the matrix containing only the columns of $X^e$ indexed by the set $S$. The columns in the matrix $X_{S^*}^e$ are referred to as causal predictors. It is important to note the distribution of $Y^e$ in~\eqref{equ:inv_ass1} is unique for any given $\g^*$, implying that the support $S^*$ is recoverable for each environment (see Remark~\ref{rem} below for considerations regarding collision). 

Upon receiving $Y^e$ for each $e \in \Ec$, one wishes to recover the support $S^*$  (or equivalently $\g^*$). Let $\hat{S}^*_e$ be the estimate of $S^*$ for some $e\in\Ec$. We correctly recover the support $S^*$ if $\hat{S}^*_e = S^*$ for \emph{all} $e$. In the case where $\hat{S}^*_e\ne S^*$ for any $e\in\Ec$, the recovered $\g^*$ will not be invariant over all environments as in~\eqref{equ:inv_ass1}. Thus, the probability of error in recovering $S^*$ is
\begin{equation}\label{eqn::perr1}
    P_{\text{err}} = \Ps\{ \hat{S}^*_e\ne S^* \text{ for any } e\in\Ec\}, 
\end{equation}
where the probability is taken over the two sources of randomness, i.e., the random support $S^*$ and noise $N^e$.

\vspace{-0.5em}
\begin{remark}
\label{rem}
For simplicity of presentation, we assume $X^e$ and $w$ exist such that no collisions occur, namely, $X^e\g^1 \ne X^e\g^2$, for any $\g^1\ne\g^2$ generated as described above (this holds almost surely for any continuous random design matrix $X^e$). 
\end{remark}
\vspace{-0.5em}

\subsection{Differences between Assumptions}
\label{sec:diff}
 In an effort to derive bounds on $P_{\text{err}}$ in~\eqref{eqn::perr1}, our main assumption in~\eqref{equ:inv_ass1} differs from~\cite[Assumption~1]{peters2016causal} in several subtle ways that we now make an attempt to justify.

 \noindent\underline{\bf Coefficients (known vs. unknown):} The coefficients $w$ are known, whereas in~\cite[Assumption~1]{peters2016causal}, they need to be estimated. Since we focus on the lower bounds, the known coefficients serve as an oracle setting for our purpose, as knowing the coefficients can only bring the error probability lower. Furthermore, we develop algorithms for unknown coefficient settings (Algorithm~$2$). Note that the locations of the zero entries in $\g^*$ are unknown for both settings.

  \noindent\underline{\bf Noise distribution (arbitrary vs. Gaussian):} In~\cite[Assumption~1]{peters2016causal}, the noise distribution is zero-mean but otherwise arbitrary, while their methods and identification results rely on Gaussian noise. We fix the noise distribution in~\eqref{equ:inv_ass1} to be the widely adopted Gaussian noise, while keeping the variance to be unknown but within a known range; even though we shall see that for deriving the lower bounds, it is equivalent to assuming that $\sigma^2$ is known, as shown in our bounds. 

 \noindent\underline{\bf Design matrix (random vs. deterministic):} While the predictor variables $X^e$ are modeled to be random in~\cite[Assumption~1]{peters2016causal}, we focus on the deterministic $X^e$ since we focus on lower bounds. One extreme case is when the design matrix has \iid entries, which has been adopted in the information-theoretic perspective of compressive sensing (e.g.,~\cite{jin2011limits,scarlett2016limits}). However, this is not an interesting setting for \textsf{ICP} as the underlying causal predictors (i.e., columns of the design matrix) are likely to be generated from some joint distributions rather than a product distribution (i.e., the independent setting). Since our goal is to derive lower bounds on the error probability, it is then reasonable to consider deterministic settings to model the ``best" possible environment $X^e$ and characterize the corresponding error probability.

Unique recovery of $S^*$ for each environment in~\eqref{equ:inv_ass1} allows us to define the probability of error as in~\eqref{eqn::perr1}. The original formulation of \textsf{ICP} in~\cite{peters2016causal} is more general.  Without the assumptions mentioned above, a unique recovery of $S^*$ is not guaranteed. As a result, the best that can be done in~\cite{peters2016causal} is to focus on the intersection of all possible $S^*$, which leads to stronger guarantees under the family-wise error rate (FWER) but can oftentimes be conservative by reporting no discoveries. Also, \textsf{ICP} requires that the distribution of the noise and model coefficients not change over environment (i.e., $Y$ cannot be intervened), which can be relaxed in~\cite{du2022learning,du2023generalized,du2023identifying}.

\subsection{\textsf{ICP} vs. Gaussian \textsf{MAC}}
We now examine the Gaussian \textsf{MAC} and its connection with our proposed variant of \textsf{ICP} in~\eqref{equ:inv_ass1}. In such a channel, each sender $i\in \{1,\dots k \}$ has access to a codebook $C^i = \{c^i_1,c^i_2,\dots,c^i_m\}$, where $c^i_j \in \mathbb{R}^n$ and $m$ is the number of codewords in $C^i$. To transmit information, the $i$th sender first chooses a codeword and then sends the $t$-th element of the chosen codeword at transmission time $t$ as the input symbol $x_{i,t}$ such that the receiver obtains 
\begin{equation}\label{eqn::mac_mdl}
  y_t = h_1x_{1,t} + h_2x_{2,t} + \dots + h_kx_{k,t} +N_t,
\end{equation}
where $h_i$ is the channel gain for sender $i$, and $N_t \sim \mathcal{N}(0,\s^2)$ for all $t\in \{1\dots n \}$. Conventionally, the number of senders $k$ is known. The Gaussian \textsf{MAC} of interest here is one in which no collisions occur.  Additionally, it is often assumed there exists a \emph{total power constraint} over all codewords such that $\sum_{i=1}^m \sum_{t=1}^n x_{i,t}^2  \leq nmP$.
After $n$ transmissions, the receiver needs to determine which codewords in codebook $C$ were sent. 

While models in~\eqref{equ:inv_ass1} and~\eqref{eqn::mac_mdl} appear similar, we list several important differences between our formulation of \textsf{ICP} and codeword recovery in the Gaussian \textsf{MAC}. 
\begin{enumerate}
    \setlength\itemsep{0em}
    \item In \textsf{ICP}, there is a single shared ``codebook'' $X^e$ whereas in a Gaussian \textsf{MAC}, each sender has its own codebook. 
    \item In \textsf{ICP}, $k$ is arbitrary (i.e., some unknown value in $\{0,
    \dots,m\}$). Conventionally, in a Gaussian \textsf{MAC}, $k$ is assumed to be fixed. 
    \item There is no notion of \emph{environment} in the Gaussian \textsf{MAC}. 
\end{enumerate}
These two models are bridged by (a) considering a codebook shared by all senders, and (b) by assuming the number of senders $k$ is arbitrary. 
Additionally, if the number of environments is one, the problem of recovering $\g^*$ reduces to recovery in a Gaussian \textsf{MAC} with a shared codebook and arbitrary $k$. Indeed, recovering $S^*$ for an individual environment in~\eqref{equ:inv_ass1} is identical to codeword recovery in a Gaussian \textsf{MAC}. We later leverage this observation to identify limits on $P_{\text{err}}$ in~\eqref{eqn::perr1}.

\section{Lower Bounds on Error Probability}
\label{sec:bounds}
We provide three bounds on $P_{\text{err}}$ in~\eqref{eqn::perr1}. Each bound is derived based on the observation that individual environments can be treated as a Gaussian \textsf{MAC} with a shared codebook and unknown $k\in\{0,
\dots,m\}$. Because of this, $P_{\text{err}}$ is at least as small as the largest probability of error for codeword recovery in a Gaussian \textsf{MAC} over all $e\in\Ec$. Specifically, 
\begin{align}
P_{\text{err}} &= \Ps\biggl\{ \bigcup_{e\in\Ec} \{ \hat{S}^*_e \ne S^* \}\biggr\} 
 \ge \max_{e\in\Ec} \Ps\{\hat{S}^*_e \ne S^*\}, \label{eqn:bnd_prt1}
\end{align}
where $\Ps\{\hat{S}^*_e \ne S^*\}$ is the probability of error in recovering a $S^*$ in a Gaussian \textsf{MAC} with shared codebook and arbitrary $k$. From here, we leverage existing multiple-access results and strategies to bound $\Ps\{\hat{S}^*_e \ne S^*\}$. 

All else being equal, the probability of error for a random $\s^2$ between $[\sigma^2_\text{min}, \sigma^2_\text{max}]$ will never go lower than it will when $\s^2 = \s^2_\text{min}$. Consequently, $P_{err}$ for the random $\s^2$ setting is bounded by the setting in which the variance is fixed to be $\s^2_\text{min}$. We bound in this manner by treating the Gaussian \textsf{MAC} in~\eqref{eqn:bnd_prt1} as one having a noise variance $\s_\text{min}^2$.

Each proposed bound differs in the assumptions and the constraints used. Two bounds proposed assume a constraint on $X^e$ while one does not. The bounds dependent on constraints apply to any $X^e$ satisfying the associated constraint, and, as such, are looser than the bound applying to a specific $X^e$.

Let the quantity $v^e_{S} \in \mathbb{R}^{n_e}$ be the sent signal if codewords indexed by the set $S$ were sent. The sent signal at $t \in \{1,\dots, n_e \}$ is then $v_{S,t}^e = \sum_{i \in S} w_ix^e_{i,t}$. 
We define $\Tc_m$ to be the set of all subsets of $\{1,\dots,m\}$. For some $S,S'\in\Tc_m$, let the euclidan distance between $v^e_{S}$ and $v^e_{S'}$ be $d^e_{S,S'} =  || v^e_{S} - v^e_{S'}||_2$. We denote the Gaussian cumulative distribution function as $\Phi(\cdot)$.

\smallskip
\noindent\underline{\bf Bound I (data dependent bound):}
We first propose a bound on $P_{\text{err}}$ in which there is no power constraint. The bound is derived by determining the probability that the sent signal $v^e_{S^*}$ is incorrectly decoded using the distance between $v^e_{S^*}$ and the next closest signal. 
\begin{proposition} \label{cor:lb1}
    The probability of error is lower bounded by 
    \begin{equation} \label{eqn:lb1}
    P_{\text{err}} \ge \max_{e\in\Ec} \frac{1}{2^m} \sum_{S\in\Tc_m}  \Phi\left( -\frac{1}{2\s_{\text{min}}} \min_{S'\in\Tc_m\setminus S} d^e_{S,S'} \right).
    \end{equation}
\end{proposition}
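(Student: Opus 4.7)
The plan is to start from inequality~(4) in the excerpt, which already reduces the problem to lower bounding $\Ps\{\hat{S}^*_e \ne S^*\}$ for a single environment. Fix any $e \in \Ec$ and drop the superscript for readability. Since $S^*$ is uniform over $\Tc_m$ (which has $2^m$ elements) and $Y = v_{S^*} + N$ with $N \sim \Ncal(0,\sigma^2 I)$, the Bayes-optimal decoder is maximum-likelihood, which here coincides with minimum-Euclidean-distance decoding. Hence any decoder $\hat{S}^*_e$ satisfies $\Ps\{\hat{S}^*_e \ne S^*\} \ge \Ps\{\hat{S}_{\mathrm{MD}} \ne S^*\}$, so it suffices to lower bound the minimum-distance decoder's error.

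Next, I would derive a genie-style pairwise bound. Conditioning on $S^* = S$ and fixing any alternative $S' \in \Tc_m \setminus S$, the event that minimum-distance decoding fails contains the event $\{\|Y - v_{S'}\| \le \|Y - v_S\|\}$. Substituting $Y = v_S + N$ and projecting $N$ onto the unit vector $(v_{S'}-v_S)/d_{S,S'}$ reduces the latter to a one-dimensional half-space probability under an $\Ncal(0,\sigma^2)$ law, yielding
\begin{equation*}
\Ps\bigl\{\|Y - v_{S'}\| \le \|Y - v_S\| \,\big|\, S^* = S\bigr\} \;=\; \Phi\!\left(-\frac{d_{S,S'}}{2\sigma}\right).
\end{equation*}
Since $\Phi$ is strictly increasing, the tightest such bound is obtained by choosing $S'$ to minimize $d_{S,S'}$ over $\Tc_m \setminus S$; this gives a lower bound on $\Ps\{\hat{S}_{\mathrm{MD}} \ne S \mid S^* = S\}$.

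Averaging over the uniform prior on $S^*$ and invoking the monotonicity of $\sigma \mapsto \Phi(-d/(2\sigma))$ to replace $\sigma$ with $\sigma_{\text{min}}$ (which is valid uniformly over $\sigma \in [\sigma_{\text{min}},\sigma_{\text{max}}]$) produces
\begin{equation*}
\Ps\{\hat{S}^*_e \ne S^*\} \;\ge\; \frac{1}{2^m}\sum_{S \in \Tc_m}\Phi\!\left(-\frac{1}{2\sigma_{\text{min}}}\,\min_{S' \in \Tc_m \setminus S} d^e_{S,S'}\right).
\end{equation*}
Maximizing over $e \in \Ec$ and combining with~(4) yields the claimed inequality.

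The calculations are routine once the setup is in place; the only subtle points, which I would make explicit in the write-up, are (i) justifying that the lower bound must apply to an \emph{arbitrary} estimator, which is handled by the optimality of ML under the uniform prior, and (ii) verifying that replacing the unknown $\sigma^2$ with $\sigma_{\text{min}}^2$ preserves the inequality across the entire range $[\sigma_{\text{min}}^2,\sigma_{\text{max}}^2]$. I do not anticipate a genuine technical obstacle; the proof is essentially a genie (nearest-neighbor) argument tailored to the shared-codebook Gaussian \textsf{MAC} interpretation developed in Section~\ref{sec:ICPandMAC}.
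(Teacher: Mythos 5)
Your proof is correct and follows essentially the same route as the paper's: decompose over the uniform prior on $S^*$, lower bound each conditional error by the pairwise probability of confusing $v^e_S$ with its nearest neighbor (the half-space computation giving $\Phi(-d^e_{S,S'}/(2\sigma))$), substitute $\sigma_{\text{min}}$, and combine with the max-over-environments inequality. Your explicit reduction to the minimum-distance (ML) decoder via Bayes optimality under the uniform prior is a welcome refinement---the paper asserts the conditional bound for an arbitrary estimator directly, appealing only to the symmetry of the Gaussian noise---but it does not change the substance of the argument.
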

\begin{proof}
Let $\Ps\{\hat{S}^*_e \ne S^*\}$ be the probability of incorrectly decoding $S^*$ in a Gaussian \textsf{MAC} with a shared codebook and arbitrary $k$. As $S^*$ is chosen uniformly from $\Tc_m$, 
 \begin{equation*}
\Ps\{\hat{S}^*_e \ne S^*\} = \frac{1}{2^m}\sum_{S\in\Tc_m} \Ps\{ \hat{S}^*_e \ne S^* | S^* = S\}. \label{eqn::perr_bnd_3}
\end{equation*}
Since Gaussian noise is symmetric, the probability $S^*$ is not recovered for some environment $e\in\Ec$ is at least the probability the target $Y^e$ is half the distance to the next closest possible sent signal. That is,
\begin{equation}
 \Ps\{ \hat{S}^*_e \ne S^* | S^* = S\} \ge \Phi\left( -\frac{1}{2\s_{\text{min}}} \min_{S'\in\Tc_m\setminus S} d^e_{S,S'} \right).
\end{equation}
The probability of error in~\eqref{eqn:lb1} then follows from~\eqref{eqn:bnd_prt1}. 
\end{proof}


\smallskip
\noindent\underline{\bf Bound II (under power constraint):} An alternative bound can be derived by considering a total power constraint of
\begin{equation}\label{eqn::peak_pwr_2}
    \sum_{i=1}^m \sum_{t=1}^{n_e} (w_ix_{i,t}^e)^2  \leq mn_eP_e.
\end{equation}
With this, we derive an upper bound on the average $d^e_{S,S'}$ over all combinations of $S$ and $S'$, and follow the same minimum distance argument used by Shannon in~\cite{shannon1959probability} to derive the bound in Proposition~\ref{cor:lb2} (see Appendix). Additionally, we present a second bound using~\eqref{eqn::peak_pwr_2} in Corollary~\ref{cor:cor1}. While the bound in Proposition~\ref{cor:lb2} is tighter, it can be simplified in Corollary~\ref{cor:cor1}. We note that this result can be of independent interest (e.g., the channel with feedback under the zero-rate setting~\cite{xiang2013gaussian}).
\begin{proposition} \label{cor:lb2}
 Let $X^e$ obey the power constraint in~\eqref{eqn::peak_pwr_2} for each $e \in \mathcal{E}$. The probability of error in recovering $\g^*$ is then lower bounded by 
    \begin{equation} \label{eqn:bnd2}
    P_{\text{err}} \ge \max_{e\in\Ec}\sum_{i = 1}^{m-1}\frac{1}{2^i}  \Phi\left( - \sqrt{\frac{2^{m-i}(m-i)n_eP_e}{4\s^2_{\text{min}}\left(2^{m-i}-1\right)}} \right).
    \end{equation}
\end{proposition}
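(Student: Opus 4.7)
The plan is to apply~\eqref{eqn:bnd_prt1} to reduce the problem to a single environment $e\in\Ec$, and then to partition the $2^m$ equally-likely codewords into disjoint ``shells'' so that Shannon's minimum-distance argument applies cleanly inside each shell, with the contributions summed using the disjointness. I would start from the pairwise-error lower bound
\[
\Ps\{\hat S^*_e\ne S\mid S^* = S\} \;\ge\; \Phi\bigl(-d_S^e/(2\s_\text{min})\bigr),\qquad d_S^e \coloneqq \min_{S'\ne S} d^e_{S,S'},
\]
which is the same step used in Proposition~\ref{cor:lb1} (with the worst-case variance replaced by $\s_\text{min}^2$), and then average over the uniform prior on $S^*$ to obtain $\Ps\{\hat S^*_e\ne S^*\}\ge \frac{1}{2^m}\sum_{S\in\Tc_m}\Phi(-d_S^e/(2\s_\text{min}))$.

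To exploit the constraint~\eqref{eqn::peak_pwr_2} shell-by-shell, I would relabel the columns so that $\|w_1 x_1^e\|^2\le\cdots\le\|w_m x_m^e\|^2$; this sort immediately gives the key consequence $\sum_{j=1}^{m-i}\|w_j x_j^e\|^2 \le (m-i)\,n_e P_e$ for every $i$. For $i=1,\ldots,m-1$ define the shell $V_i \coloneqq \{S\in\Tc_m : \max S = m-i+1\}$; these shells are disjoint, have sizes $|V_i| = 2^{m-i}$, and together exhaust every nonempty subset of $\{1,\ldots,m\}$ except $\{1\}$. Every $S\in V_i$ shares the common offset $w_{m-i+1} x_{m-i+1}^e$ and differs from the other members of $V_i$ only on the first $m-i$ columns, so a direct computation gives the centered energy
\[
\widetilde E_i \;=\; 2^{m-i-2}\sum_{j=1}^{m-i}\|w_j x_j^e\|^2 \;\le\; 2^{m-i-2}(m-i)\,n_e P_e,
\]
and via the identity $\sum_{S,S'\in V_i}\|v_S^e-v_{S'}^e\|^2 = 2|V_i|\widetilde E_i$ this delivers a mean squared pairwise distance in $V_i$ of exactly the form appearing inside $\Phi$ in~\eqref{eqn:bnd2}.

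The final step is to combine three averaging inequalities within each shell: (i) $d_S^e \le \min_{S'\in V_i\setminus\{S\}}d_{S,S'}^e$ (restricting the min to $V_i$ only enlarges it), (ii) concavity of $\sqrt{\,\cdot\,}$ to pass from the mean squared distance to the mean distance, and (iii) Jensen's inequality applied to $\Phi(-\cdot)$, which is convex on $[0,\infty)$, to push the shell-average of $\Phi$ below a single $\Phi$ evaluated at the square root of the mean squared pairwise distance. Summing over $i$ via disjointness then produces
\[
\Ps\{\hat S^*_e\ne S^*\} \;\ge\; \sum_{i=1}^{m-1}\frac{|V_i|}{2^m}\,\Phi(\cdot) \;=\; \sum_{i=1}^{m-1}\frac{1}{2^i}\Phi(\cdot),
\]
which matches the form claimed in~\eqref{eqn:bnd2} after taking $\max_{e\in\Ec}$. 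The delicate part will be the bookkeeping of constants inside the $\Phi$ argument: the shell sizes and centered-energy computation unambiguously produce the $2^{m-i}$, $(m-i)$, and $2^{m-i}-1$ factors, but the exact power of $2$ in the numerator is sensitive to whether one uses $\tfrac{2|V_i|\widetilde E_i}{|V_i|-1}$ or the looser $\tfrac{2\widetilde E_i}{|V_i|-1}$ and to the order in which the min-to-average and Jensen steps are invoked, so some care is required to land on precisely the constants in~\eqref{eqn:bnd2}.
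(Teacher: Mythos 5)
Your proposal is correct and in fact establishes a slightly \emph{stronger} bound than \eqref{eqn:bnd2}, but it takes a genuinely different route from the paper's. The paper follows Shannon's iterative template: it computes the average squared pairwise distance over the \emph{entire} codebook of $2^m$ sendable signals via binomial-coefficient identities, bounds it by $2^m m n_e P_e/(2^m-1)$ under \eqref{eqn::peak_pwr_2}, extracts one signal whose nearest-neighbor distance is at most this average, and then repeatedly deletes a codeword (removing $2^{m-i}$ signals at step $i$) and re-applies the average-distance bound to the shrunken codebook, summing the contributions. Your shell decomposition $V_i=\{S:\max S = m-i+1\}$ after energy-sorting replaces this peeling with a single disjoint partition: the common offset $w_{m-i+1}x^e_{m-i+1}$ cancels inside each shell, the centered-energy identity gives the within-shell mean squared pairwise distance exactly as $2^{m-i-1}\sum_{j\le m-i}\|w_jx^e_j\|_2^2/(2^{m-i}-1)$, and the sorting makes $\sum_{j\le m-i}\|w_jx^e_j\|_2^2\le(m-i)n_eP_e$ immediate (the paper's peeling implicitly needs the analogous ``remove the highest-energy codeword first'' but never states it). Your route buys (i) a fully rigorous accounting with no double counting and no informal ``the first $2^{m-i}$ combinations are bounded by the $(2^{m-i})$-th'' step, and (ii) a factor of $2$ inside the square root: the paper's bound on $\sum_{S_1}v^2_{S_1,t}$ keeps coefficient $2^{m-1}$ on $\sum_i w_i^2x_{i,t}^2$ where the exact value is $2^{m-2}$, so your tighter constant implies \eqref{eqn:bnd2} after relaxation, exactly as you anticipate. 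Two small corrections for the write-up: Jensen applied to the convex function $\Phi(-\,\cdot\,)$ places the shell average of $\Phi$ \emph{above} (not ``below'') the single $\Phi$ at the averaged argument --- which is precisely the direction a lower bound needs --- and the three inequalities must be chained in the order Jensen first, then $\min\le\text{average}$, then $\bar d\le\sqrt{\overline{d^2}}$, invoking monotonicity of $\Phi(-\,\cdot\,)$ for the last two.
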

\begin{corollary}\label{cor:cor1}
Under the same conditions as in~Proposition~\ref{cor:lb2},
\begin{equation} \label{eqn:cor1}
    P_{\text{err}} \ge (1-2^{-m/2}) \max_{e\in\Ec}\Phi\left( - \sqrt{\frac{2^{m/2}mn_eP_e}{8\s^2_{\text{min}}\left(2^{m/2}-1\right)}} \right).
\end{equation}
\end{corollary}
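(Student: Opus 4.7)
The plan is to derive Corollary~\ref{cor:cor1} from Proposition~\ref{cor:lb2} by collapsing its summation into a single $\Phi$ term multiplied by a geometric coefficient. The starting observation is that substituting $i = m/2$ into the argument of $\Phi$ in Proposition~\ref{cor:lb2} produces exactly the argument appearing in Corollary~\ref{cor:cor1}, since
\[
\frac{2^{m/2}(m/2)\,n_eP_e}{4\s^2_{\text{min}}(2^{m/2}-1)} \;=\; \frac{2^{m/2}mn_eP_e}{8\s^2_{\text{min}}(2^{m/2}-1)}.
\]
Write $y_e^2$ for this common value and $x_i^2$ for the corresponding quantity for the $i$-th term in Proposition~\ref{cor:lb2}. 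The task then reduces to lower-bounding $\sum_{i=1}^{m-1}\tfrac{1}{2^i}\Phi(-x_i)$ by $(1 - 2^{-m/2})\Phi(-y_e)$ for each $e \in \Ec$.

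Next I would establish the key monotonicity: $x_i^2$ is non-increasing in $i$ over $\{1,\ldots,m-1\}$. Writing $x_i^2$ as a constant times $f(m-i)$ with $f(k) := k\cdot 2^k/(2^k - 1)$, this reduces to showing $f$ is non-decreasing in integer $k\ge 1$, which is a short algebraic check via $f(k+1) - f(k) \ge 0$ after clearing denominators. As an immediate consequence, for every $i \ge m/2$, $x_i \le y_e$, and by monotonicity of the Gaussian CDF, $\Phi(-x_i) \ge \Phi(-y_e)$.

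The final step combines these ingredients: restrict the sum in Proposition~\ref{cor:lb2} to an appropriate index range (each dropped term is non-negative, preserving the inequality direction), apply the uniform lower bound $\Phi(-x_i)\ge \Phi(-y_e)$ to each remaining term, and evaluate the resulting sum of weights $1/2^i$ to obtain a multiple of $\Phi(-y_e)$. Combining with~\eqref{eqn:bnd_prt1} and maximizing over $e \in \Ec$ then yields the stated bound.

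The main obstacle I anticipate is matching the exact constant $(1 - 2^{-m/2})$ claimed in the corollary: the straightforward restriction to $\{\lceil m/2\rceil,\ldots,m-1\}$ yields only the smaller coefficient $\sum_{i=\lceil m/2\rceil}^{m-1} 2^{-i} = 2^{1-m/2}(1-2^{-m/2})$, so additional work is required to sharpen it. A natural route is to invoke convexity of $t\mapsto\Phi(-\sqrt{t})$ on $(0,\infty)$, which follows from a direct second-derivative computation, and apply Jensen's inequality over a carefully chosen index subset together with a suitable upper bound on the corresponding weighted average of $x_i^2$, so that the coefficient $(1 - 2^{-m/2})$ can be recovered while still yielding $\Phi(-y_e)$ as the effective Jensen lower bound.
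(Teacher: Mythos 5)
Your proposal stalls exactly where it must, and the Jensen route you sketch cannot close the gap. With $h(t)=\Phi(-\sqrt{t})$ convex, Jensen gives $\sum_i p_i\,h(x_i^2)\ge h\bigl(\sum_i p_i x_i^2\bigr)$ for the normalized weights $p_i\propto 2^{-i}$ over $i\in\{1,\dots,m/2\}$; but by the monotonicity you yourself established, $x_i^2\ge x_{m/2}^2=y_e^2$ for every $i\le m/2$, so $\sum_i p_i x_i^2\ge y_e^2$ and hence $h\bigl(\sum_i p_i x_i^2\bigr)\le \Phi(-y_e)$ --- the wrong direction. No choice of index subset avoids this: any subset whose weights $2^{-i}$ sum to $1-2^{-m/2}$ must include indices $i<m/2$, and for those $\Phi(-x_i)<\Phi(-y_e)$.

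The difficulty is not yours. The paper's proof takes the first $m/2$ terms of \eqref{eqn:bnd2} and ``replaces each term with the last,'' i.e., it needs $\Phi(-x_i)\ge\Phi(-x_{m/2})$ for $i\le m/2$ in order to produce the coefficient $\sum_{i=1}^{m/2}2^{-i}=1-2^{-m/2}$; this is precisely the inequality your monotonicity computation shows is reversed (the $\Phi$ factors are \emph{increasing} in $i$, not decreasing). A numerical check confirms the implication genuinely fails: for $m=4$ and $n_eP_e/\sigma^2_{\text{min}}=8$, the right side of \eqref{eqn:bnd2} evaluates to roughly $\tfrac12\Phi(-2.619)+\tfrac14\Phi(-2.309)+\tfrac18\Phi(-2)\approx 0.00767$, while the right side of \eqref{eqn:cor1} is $\tfrac34\Phi(-2.309)\approx 0.00785$, so \eqref{eqn:cor1} cannot be deduced from \eqref{eqn:bnd2} by any weakening argument. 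Your fallback restriction to $i\ge\lceil m/2\rceil$ is sound and yields the valid but smaller constant $2^{1-m/2}(1-2^{-m/2})$; that (or simply keeping the single $i=1$ term) is the most one can honestly extract here, and the constant $1-2^{-m/2}$ in the corollary should be regarded as unproven as stated.
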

\begin{proof}
 First, \eqref{eqn:bnd2} can be further bounded by taking only the first $m/2$ terms of the sum. Then, as the remaining terms are monotonically decreasing in $i$, we replace each term with the last to derive~\eqref{eqn:cor1}.
\end{proof}

\smallskip
\noindent\underline{\bf Bound III (under a variant of power constraint):} We now consider a constraint such that 
\begin{equation} \label{eqn:sig_cnst_env}
    \sum_{t=1}^{n_e} \sum_{S\in\Tc_m} (v^e_{S,t})^2 \leq 2^mn_eQ_e,
\end{equation}
for all $e\in\Ec$. That is, instead of considering a constraint on the codewords, we now constrain all possible sent signals. With this, we view the problem as a conventional Gaussian point-to-point channel with a power constraint as in~\eqref{eqn:sig_cnst_env} and leverage existing Gaussian point-to-point channel results to bound $P_{\text{err}}$. As was also done in Corollary~\ref{cor:cor1}, we further bound Proposition~\ref{cor:lb3} to present a more simplified expression in Corollary~\ref{cor:cor2} by directly following~\cite{shannon1959probability}.
\begin{proposition} \label{cor:lb3}
     Let the predictors $X^e$ obey the constraint in~\eqref{eqn:sig_cnst_env} for each $e \in \mathcal{E}$. The probability of error in recovering $\g^*$ is then lower bounded by
    \begin{equation}\label{eqn:lb3}
        P_{\text{err}} \ge \max_{e\in\Ec}\frac{1}{2^m} \sum_{i = 1}^{2^m-1}  \Phi\left( - \sqrt{\frac{(2^{m}-i)n_eQ_e}{2\s^2_{\text{min}}\left(2^m-1-i\right)}} \right).
    \end{equation}
\end{proposition}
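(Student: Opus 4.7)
The plan is to view each environment's recovery problem as decoding in an $M = 2^m$-ary Gaussian point-to-point channel whose codebook consists of the $M$ possible noiseless received signals $\{v^e_S\}_{S \in \Tc_m}$, and to invoke a known Shannon-type lower bound on $M$-ary signaling under an average-energy constraint. The constraint~\eqref{eqn:sig_cnst_env} is precisely such an average-energy constraint on this effective codebook, $\frac{1}{2^m}\sum_{S \in \Tc_m}\|v^e_S\|_2^2 \le n_eQ_e$, so the reduction is clean.

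The first two steps are routine. Starting from~\eqref{eqn:bnd_prt1}, $P_{\text{err}} \ge \max_{e\in\Ec}\Ps\{\hat S^*_e \ne S^*\}$; and because the conditional error probability is monotone non-increasing in the noise variance, it suffices to prove the per-environment bound under $\s^2 = \s^2_{\text{min}}$, matching the opening discussion of Section~\ref{sec:bounds}. Since $S^*$ is uniform on $\Tc_m$, $\Ps\{\hat S^*_e \ne S^*\} = \frac{1}{2^m}\sum_{S\in\Tc_m}\Ps\{\hat S^*_e \ne S \mid S^*=S\}$, and we lower bound this average by the MAP error probability of the Gaussian hypothesis test on $\{v^e_S\}$.

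The core step is a Shannon-style minimum-distance argument. I would sort the codewords by $\|v^e_S\|_2^2$ in ascending order and, for each $i \in \{1,\dots,M-1\}$, focus on the subset $T_i$ of the $M-i$ least-energetic codewords; a direct averaging over non-negative terms shows this subset has total energy at most $(M-i)n_eQ_e$. Using the identity
\[
\sum_{S,S' \in T_i}\|v^e_S-v^e_{S'}\|_2^2 = 2|T_i|\sum_{S \in T_i}\|v^e_S\|_2^2 - 2\left\|\sum_{S \in T_i}v^e_S\right\|_2^2,
\]
the subset's average squared pairwise distance is at most $2(M-i)n_eQ_e/(M-1-i)$, so at least one pair realizes this bound; its pairwise-error contribution $\Phi\!\left(-\sqrt{(M-i)n_eQ_e/(2\s^2_{\text{min}}(M-1-i))}\right)$ is exactly the $i$-th term of~\eqref{eqn:lb3}. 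Aggregating these contributions across $i$ and normalizing by $2^m$ gives the claimed bound.

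The main obstacle is the combinatorial bookkeeping in the last step: one must attribute each of the $M-1$ pairwise contributions to a distinct codeword so that the per-codeword conditional errors, when summed and divided by $2^m$, faithfully reproduce the right-hand side of~\eqref{eqn:lb3} without over-counting. This is the sequential attribution argument imported from~\cite{shannon1959probability}; the remaining reductions (the $\max_e$ step, the $\s^2$ monotonicity, and the translation of~\eqref{eqn:sig_cnst_env} into an average codebook-energy constraint) are routine and parallel the corresponding steps in the proofs of Propositions~\ref{cor:lb1} and~\ref{cor:lb2}.
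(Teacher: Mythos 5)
Your proposal is correct and follows essentially the same route as the paper, which simply reinterprets $\{v^e_S\}_{S\in\Tc_m}$ as a $2^m$-word codebook for a Gaussian point-to-point channel under the energy constraint~\eqref{eqn:sig_cnst_env} and cites the minimum-distance lower bound of~\cite{shannon1959probability}. You have merely unpacked that citation—the sorted-subset averaging, the pairwise-distance identity, and the sequential attribution of contributions—and your reconstruction matches the terms of~\eqref{eqn:lb3} exactly.
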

\begin{proof}
 In a change of perspective, treat the collection of sendable signals $v^e_{S,t}$ for $S\in\Tc_m$ as a codebook in a Gaussian point-to-point channel. The peak energy constraint is then given by~\eqref{eqn:sig_cnst_env}. A bound on the  probability $\Ps\{ \hat{S}^*_e \ne S^*\}$ then follows directly from~\cite{shannon1959probability} and thus~\eqref{eqn:lb3}. 
\end{proof}

\begin{corollary} \label{cor:cor2}
Under the same conditions as in~Proposition~\ref{cor:lb3},
\begin{equation} \label{eqn:cor2}
    P_{\text{err}} \ge \max_{e\in\Ec}\frac{1}{2}\Phi\left( - \sqrt{\frac{2^{m}n_eQ_e}{2\s^2_{\text{min}}\left(2^{m}-2\right)}} \right).
\end{equation}
\end{corollary}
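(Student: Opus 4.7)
The plan is to mimic the reasoning used to pass from Proposition~\ref{cor:lb2} to Corollary~\ref{cor:cor1}, which is itself the standard simplification from~\cite{shannon1959probability}: first truncate the sum in~\eqref{eqn:lb3} to its first half, then exploit monotonicity of the summand in $i$ to collapse the retained terms into a single $\Phi$ term.

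Concretely, I would start from Proposition~\ref{cor:lb3} and keep only the $i=1,\dots,2^{m-1}$ terms of the $2^m-1$-term sum (the remaining terms, being nonnegative, can be dropped while preserving a lower bound). I would then verify that the summand is monotonically decreasing in $i$. This follows by rewriting the ratio inside the radical as
\begin{equation*}
\frac{2^m-i}{2^m-1-i} \;=\; 1 + \frac{1}{2^m-1-i},
\end{equation*}
which is strictly increasing on $1\le i\le 2^m-2$; composing with $-\sqrt{\,\cdot\,}$ (decreasing) and $\Phi$ (increasing) shows that each summand decreases in $i$.

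With monotonicity in hand, every one of the retained $2^{m-1}$ terms can be lower bounded by the term evaluated at $i=2^{m-1}$, and the common prefactor collects to $\frac{2^{m-1}}{2^m}=\frac{1}{2}$. The final step is cosmetic algebra: multiplying numerator and denominator of $\frac{2^{m-1}}{2^{m-1}-1}$ by $2$ rewrites the radical in the form $\sqrt{\frac{2^{m}n_eQ_e}{2\s^2_{\text{min}}(2^m-2)}}$ demanded by the statement, and the maximum over $e\in\Ec$ is inherited unchanged from~\eqref{eqn:lb3}.

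I do not anticipate a real obstacle. The only substantive point is the monotonicity verification for the summand, which is a one-line rational-function argument; everything else is index bookkeeping and routine simplification, in direct parallel with the proof of Corollary~\ref{cor:cor1}.
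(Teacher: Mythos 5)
Your proposal is correct and follows exactly the route the paper intends: the paper gives no explicit proof of Corollary~\ref{cor:cor2} but states it is obtained from Proposition~\ref{cor:lb3} in the same way Corollary~\ref{cor:cor1} is obtained from Proposition~\ref{cor:lb2}, i.e., truncating to the first $2^{m-1}$ terms and using monotonicity to replace each by the $i=2^{m-1}$ term, which yields the prefactor $2^{m-1}/2^m = 1/2$ and the radical $\sqrt{2^m n_e Q_e / (2\s_{\text{min}}^2(2^m-2))}$. Your monotonicity check and the final algebra are both correct, so nothing is missing.
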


\section{Algorithms and Experiments}
\subsection{Algoirthms}\label{ssec:methods}
We provide two heuristic methods with which to compare the bounds proposed. The first is an adaptation of Method~II presented in~\cite{peters2016causal}. Simply, Method II iterates over all subsets of variables. It fits a linear model to the data and tests the invariance of the residuals over all environments. The recovered support is the intersection of all invariant subsets (see~\cite{peters2016causal}). To fit our setting, we make the following changes. 
\begin{enumerate}
    \setlength\itemsep{0em}
    \item The coefficients are no longer estimated but are known.
    \item The mean and variance of the residuals are compared directly to their known values.
    \item As $S^*$ is unique, no intersection needs to be calculated. Rather, the recovered support is the subset deemed the ``most invariant'' (based on the largest p-value). 
\end{enumerate} 
We refer to this adaption of Method~II as  \textsf{MII\_known}.

Additionally, we propose a simple alternative that is a natural extension to the minimum distance decoding (\textsf{MDD}) algorithm for use in multiple environments. We refer to this method as \textsf{ICP\_MDD\_known}. An outline of this method is as follows. For each environment, the distance between the received signal $Y^e$ and $v^e_{S}$ for all $S\in\Tc_m$ is calculated. The recovered support $\Sh^e$ is the $S\in\Tc_m$ corresponding to the $v^e_{S}$ closest to $Y^e$ for all $e\in\Ec$. If the same $\Sh^e$ is recovered over all environments, we take this as the final recovered support. See Algorithm~\ref{alg:ICP_MDD} for a more detailed explanation. We also note that a current obstacle for those implementing ICP is that most methods have exponential complexity. Both Method~II and \textsf{ICP\_MDD\_known} are no exception. 

\begin{algorithm}
\caption{\textsf{ICP\_MDD\_known}}\label{alg:ICP_MDD}
\hspace*{\algorithmicindent} \textbf{Input:} Response $Y^e$, predictors $X^e$, and coefficients $w$

\hspace*{\algorithmicindent}
\textbf{Output:} $\hat{S}^*$ or nothing if no invariance was identified
\begin{algorithmic}
\For{every $e\in\Ec$}
\State $\Sh^e \coloneqq \arg\min_{S\in \Tc_m} ||Y^e-\sum_{i\in S} w_iX_i^e||_2$
\EndFor

\If{$\Sh^e$ is identical over all $e$} 
    \State \textbf{return }$\Sh^* \coloneqq \Sh^e$
\Else 
    \State \textbf{return} nothing
\EndIf 
\end{algorithmic}
\end{algorithm}

\vspace{-1.5em}
\subsection{Simulations}
\noindent\underline{\bf Simplex codes}.
We first compare the proposed bounds in a setting where the predictors $X^e$ constitute simplex codes. It is well known that simplex codes paired with \textsf{MDD} provide optimal recovery for a Gaussian point-to-point channel in the zero-rate setting. It is, however, important to note that simplex codes are not necessarily optimal for codeword recovery in a Gaussian \textsf{MAC} with unknown $k$ and a shared codebook. Further research is needed to examine the optimal codes in this special setting. Nonetheless, we expect the empirical results for such codes to be closer to the bounds when compared to other randomly generated codes.

We examine the setting where $m=3$. The matrix $X^e$ consists of simplex codes on a sphere or radius $\sqrt{n_e}$ such that  $x^e_i = \sqrt{n_e}\cdot[a_i,0,0,\cdots,0\;]^\top$,
where $a_1 = [1,0]$, $a_2 = [-\frac{1}{2},-\frac{\sqrt{3}}{2}]$, and $a_3 = [-\frac{1}{2},\frac{\sqrt{3}}{2}]$ for each $e\in\Ec$. As the columns of $X^e$ lie inside a sphere of radius $\sqrt{n_e}$, the constraints in~\eqref{eqn::peak_pwr_2} and~\eqref{eqn:sig_cnst_env} are satisfied using $P_e = 1$ and $Q_e = \frac{3}{4}$. The response $Y^e$ is generated such that $Y^e = X^e + N$, where $N\sim\mathcal{N}(0,I)$. We report the empirical probability of error averaged over $1000$ simulated datasets. 

The simulation results indicate both \textsf{ICP\_MDD\_known} and \textsf{MII\_known} approach the bound in Proposition~\ref{cor:lb1} with \textsf{ICP\_MDD\_known} performing slightly better for smaller sample sizes (Figure~\ref{fig:simp_ss}). Seeing that the empirical results remained some distance away from the bounds in Propositions~\ref{cor:lb2} and~\ref{cor:lb3} further justifies the remark that simplex codes are a sub-optimal coding scheme for this \textsf{MAC} setting. 

\begin{center}
\begin{figure*}
  \begin{subfigure}[b]{0.333\textwidth}    \includegraphics[width=\linewidth]{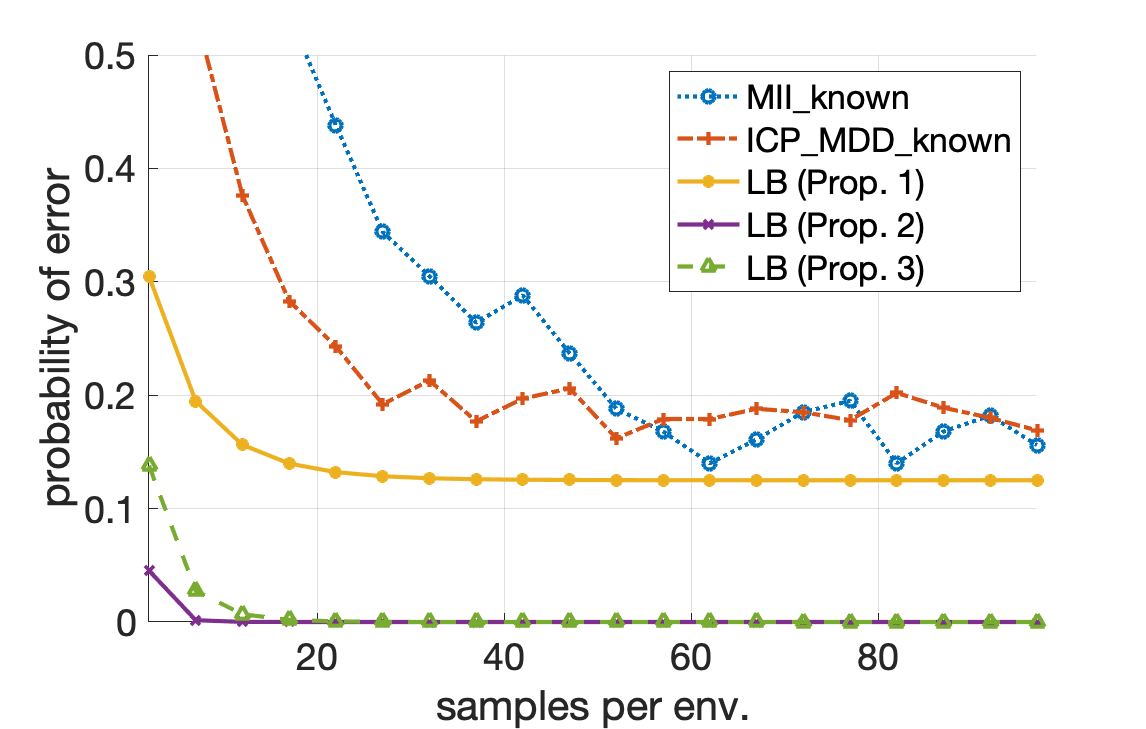}
    \caption{}
    \label{fig:simp_ss}
  \end{subfigure}
  \begin{subfigure}[b]{0.333\textwidth}    \includegraphics[width=\linewidth]{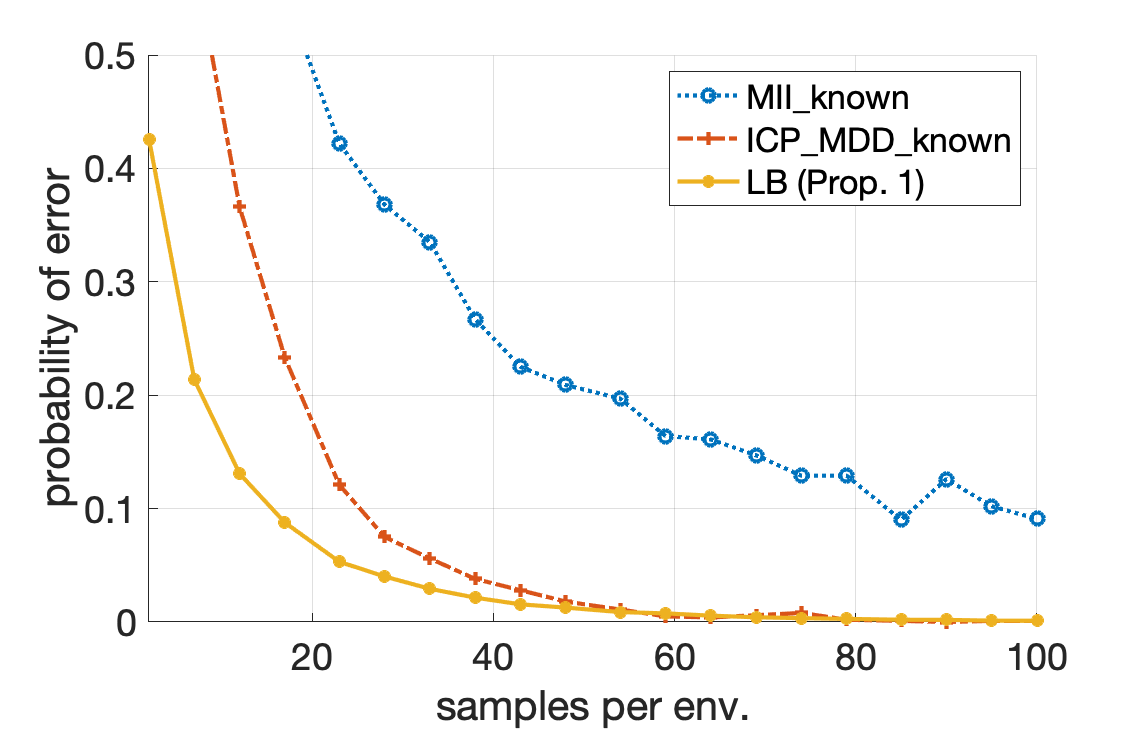}
    \caption{}
    \label{fig:known_ss}
  \end{subfigure}
  \begin{subfigure}[b]{0.333\textwidth}
    \includegraphics[width=\linewidth]{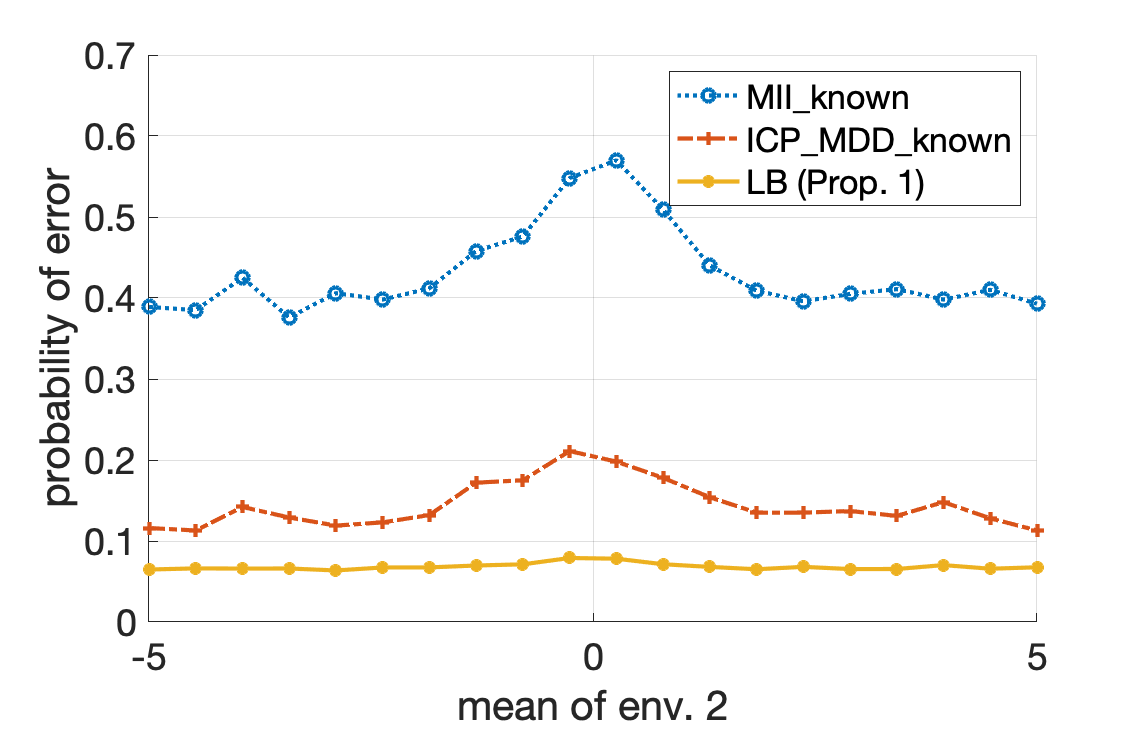}
    \caption{}
    \label{fig:known_mean}
  \end{subfigure}
  \caption{ Comparison of lower bounds: (a) for each environment, predictors are simplex codes within a sphere of radius $\sqrt{n_e}$, (b) predictors simulated Gaussian \textsf{SEM}s with a random number of edges between parents, and (c) predictors are simulated Gaussian \textsf{SEM}s where interventions constitute shifts in the mean of the second environment.}
  \vspace{-1.5em}
\end{figure*}
\end{center}

\vspace{-2em}
\noindent\underline{\bf Random Gaussian structural equation models.}
Next, each simulated dataset is a random Gaussian structural equation model (\textsf{SEM})~\cite{hox1998introduction}. Coefficients for the \textsf{SEM} are all randomly selected between $0.5$ and $1.5$. All noise variables are distributed according to $\mathcal{N}(0,1)$.  The value of $m$ is a random integer on $[3,\dots,8]$. The set of causal parents $S^*$ is chosen uniformly from $\Tc_m$. Unless otherwise specified, the number of edges between predictors is a random integer on the interval $[0,\dots,{m \choose 2}]$. We consider interventions that shift the mean of the top-level variables in the graph (i.e., those with no parents). 

Two settings in this regime are examined. The first fixes the intervention means to zero for environment one and one for environment two while increasing the number of samples per environment. While the methods \textsf{ICP\_MDD\_known} and \textsf{MII\_known} both approach the bound in Proposition~\ref{cor:lb1}, \textsf{ICP\_MDD\_known} does so at a faster rate than that of \textsf{MII\_known} (see Figure~\ref{fig:known_ss}). As these random codes represent a more challenging setting compared to the previously discussed simplex codes, the bounds in Propositions~\ref{cor:lb2} and~\ref{cor:lb3} are close to zero, and are thus not included in the figures. 

In the next setting, we vary the intervention mean for environment two from $-5$ to $5$.  The sample size per environment is $20$. Again, we find the error probability of \textsf{ICP\_MDD\_known} is closer to the bound in Proposition~\ref{cor:lb1} than that of \textsf{MII\_known}. We observe spikes in the error probability when the means of both environments are $0$. This is expected as $S = \{\}$ (the empty set) will likely be incorrectly decoded when codewords are close to zero. The lower bound in Proposition~\ref{cor:lb1} remains mostly invariant to shifts in the mean (Figure~\ref{fig:known_mean}). 

\subsection{Unknown coefficient settings}\label{ssec:unknown}
We now compare the proposed lower bounds to the more applicable setting where the coefficients $w$ are unknown. As estimating coefficients will only cause the probability of error to increase, we assert that the bounds in Propositions~\ref{cor:lb1},~\ref{cor:lb2}, and~\ref{cor:lb3} apply to this more general setting as well. 

\begin{algorithm}
\label{alg:unknown}
\caption{\textsf{ICP\_MDD}}\label{alg::ICP_MDD_UNK}
\hspace*{\algorithmicindent} \textbf{Input:} Response $Y^e$, predictors $X^e$, and threshold $p$

\hspace*{\algorithmicindent}
\textbf{Output:} The estimate $\hat{S}^*$
\begin{algorithmic}
\State Fit a linear regression model on pooled data $Y|X_S$ to obtain an estimate $\hat{\gamma}_S$ for every $S\in\Tc_m$.
\For{every $e\in\Ec$}
  \For{every $S\in\Tc_m$}
    \State Calculate $d^e_S = ||Y^e- X_S^e\hat{\gamma}_S||_2$
  \EndFor
  \State $d^e = \min_{S\in\Tc_m} d^e_S$
  \State $t^e = d^e + p*d^e$
\EndFor

\If{for all $S\in\Tc_m$, $d^e_S \leq t^e$ for all $e\in\Ec$} 
    \State accept subset $S$
\EndIf 

\State \textbf{return } intersection of all accepted sets

\end{algorithmic}
\end{algorithm}

Additionally, we propose the method \textsf{ICP\_MDD} as an extension to \textsf{ICP\_MDD\_known} for unknown coefficients and outline it in Algorithm~\ref{alg::ICP_MDD_UNK}. Perhaps the most important consideration now that $w$ is unknown is that $S^*$ is no longer unique. Because of this, we take a similar approach as in Method II in~\cite{peters2016causal}. Specifically, any subset that is a plausible $S^*$, we ``accept''. The estimate $\Sh^*$ is then the intersection of all accepted subsets. As we now potentially accept many subsets, we must define criteria for which a subset will be accepted. If, for each environment, the distance between the output $Y^e$ and the estimate $X_S^e\hat{\gamma}_S$ falls within some percentage $p$ of the minimum distance between output and estimate, then we accept that subset. Intuitively, one would want to choose the parameter $p$ to be small (e.g., $p = 0.1,0.05,0.01$). Additionally, we estimate coefficients for a given subset using data pooled over all environments. We refer to $X_S$ as the pooled dataset over all environments for some subset $S$. 

For the following experiments, we use the same random Gaussian \textsf{SEM} setup previously used and compare it with two other causal discovery methods. The first is Method II from~\cite{peters2016causal}. As we no longer assume the coefficients are known, we use the exact method proposed in~\cite{peters2016causal}. The second is the \textsf{LiNGAM}; in particular, we use the independent component analysis (\textsf{ICA}) based \textsf{LiNGAM} originally proposed in~\cite{shimizu2006linear}. 

We examine two settings for the unknown coefficient setting. We first simulate \textsf{SEM}s such that each predictor is generated independently of all others. Shifts in the intervention mean for environments one and two are fixed to zero and one, respectively. Results indicate \textsf{LiNGAM}, \textsf{ICP\_MDD}, and Method II perform comparably with \textsf{ICP\_MDD} performing slightly better for smaller sample sizes and \textsf{LiNGAM} performing slightly better for larger sample sizes (Figure~\ref{fig:unk_ss_indep}). 

The next setting incorporates random edges between predictors as was done in the known coefficient experiments. With this addition, the error probability of all methods decreases. Apart from small sample sizes less than $25$, \textsf{LiNGAM} achieves the lowest probability of error, followed by \textsf{ICP\_MDD}, then Method II (Figure~\ref{fig:unk_ss}). Bounds in Propositions~\ref{cor:lb2} and~\ref{cor:lb3} are near zero, so we omit them in Figures~\ref{fig:unk_ss_indep} and~\ref{fig:unk_ss}.

\begin{center}
\begin{figure}
  \begin{subfigure}[b]{0.49\columnwidth}    \includegraphics[width=\linewidth]{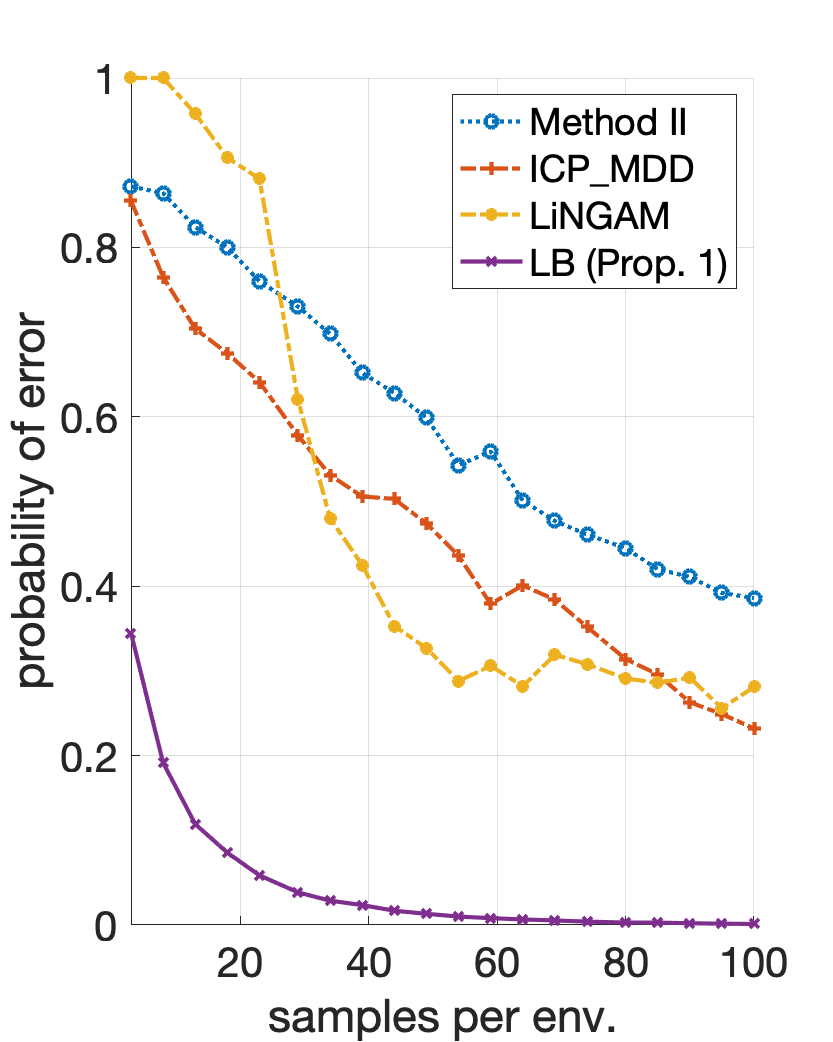}
    \caption{}
    \label{fig:unk_ss_indep}
  \end{subfigure}
  \begin{subfigure}[b]{0.49\columnwidth}
    \includegraphics[width=\linewidth]{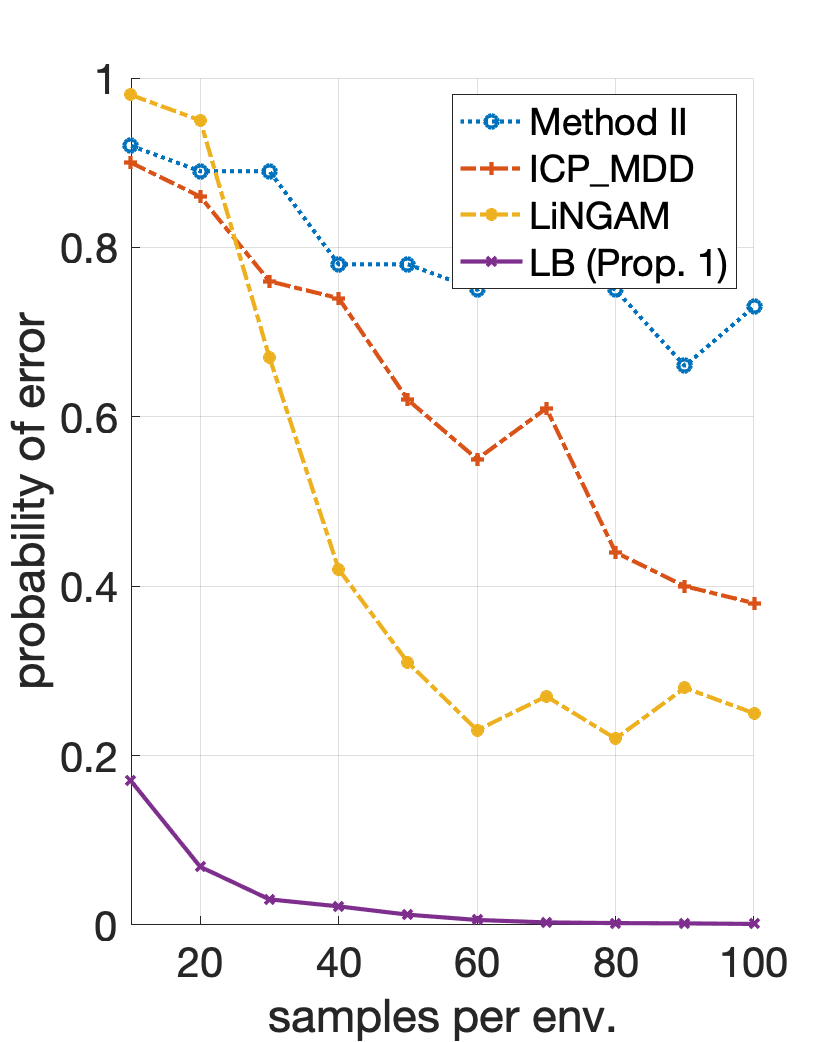}
    \caption{}
    \label{fig:unk_ss}
  \end{subfigure}
  \caption{Unknown coefficient and Gaussian \textsf{SEM} setup: (a) all predictors are sampled independently, and (b) there are a random number of edges between predictors.}
  \vspace{-1.5em}
\end{figure}
\end{center}

\vspace{-2.5em}
\balance
\bibliographystyle{IEEEtran}
\bibliography{sample}

\clearpage
\appendix
\subsection{Proof of Proposition~\ref{cor:lb2}}
For convenience, we will temporarily drop the superscript $e$ in the equations below. The average squared distance $\Bar{d}^2$ over all combinations of sent codewords is
\begin{align*}
&\frac{1}{2 {|\Tc_m| \choose 2}} \sum_{S_1 \in \Tc_m} \sum_{S_2 \in \Tc_m} d^2_{S_1,S_2} \\
 &=  \frac{1}{2 {2^m \choose 2}} \sum_{t=1}^{n_e} \sum_{S_1 \in \Tc_m} \sum_{S_2 \in \Tc_m}  (v_{S_1,t} - v_{S_2,t})^2 \\
 &=  \frac{1}{2 {2^m \choose 2}} \sum_{t=1}^{n_e} \sum_{S_1 \in \Tc_m} \sum_{S_2 \in \Tc_m}  (v^2_{S_1,t} + v^2_{S_2,t} - 2v_{S_1,t}v_{S_2,t})\\
 &=  \frac{1}{{2^m \choose 2}} \sum_{t=1}^{n_e} \Biggl[ 2^m \sum_{S_1 \in \Tc_m} v^2_{S_1,t}-\sum_{S_1 \in \Tc_m} \sum_{S_2 \in \Tc_m}v_{S_1,t}v_{S_2,t}\Biggl]. 
 \numberthis \label{eqn:g_avgd_1}
\end{align*}
We now treat each of the terms in \eqref{eqn:g_avgd_1} individually. Let $\Tc_m^k$ be the set of all subsets of $ \{1,\dots,m \}$ length $k$. The term $\sum_{S_1 \in \Tc_m} v^2_{S_1,t}$ becomes 
\begin{align*}
 &\sum_{S_1 \in \mathcal{T}_m} \left( \sum_{i \in S_1} w_ix_{i,t}\right)^2 \\
 &= \sum_{S_1 \in \Tc_m}\sum_{i \in S_1}\sum_{\substack{j\in S_1 \\ i\neq j}} w_iw_jx_{i,t}x_{j,t} + \sum_{S_1 \in \Tc_m}\sum_{i \in S_1} w_i^2x_{i,t}^2 \\
 &= \sum_{k=0}^m\sum_{S_1 \in \Tc_m^k}\sum_{i \in S_1}\sum_{\substack{j\in S_1 \\ i\neq j}} w_iw_jx_{i,t}x_{j,t} + \sum_{k=0}^m\sum_{S_1 \in \Tc_m^k}\sum_{i \in S_1} w_i^2x_{i,t}^2 \numberthis \label{eqn:g_term1_1}
 \end{align*}
 The expression in~\eqref{eqn:g_term1_1} is further simplified using two properties. Namely, for any $a = [a_1,\dots, a_m]\in \mathbb{R}^{m}$, we have 
 \begin{equation}\label{equ:binom1}
 \sum_{S_1\in \Tc^k_m}\sum_{i\in S_1} a_i = {m-1\choose k-1}\sum_{i=1}^m a_i,
 \end{equation}
 \begin{equation}\label{eqn:binom2}
     \sum_{S_1\in \Tc^k_m} \sum_{i\in S_1} \sum_{\substack{j\in S_1 \\ i\neq j}} a_ia_j = 2{m-2\choose k-2} \sum_{i=1}^{m-1} \sum_{j=i+1}^m a_ia_j.
 \end{equation}
 We derive~\eqref{equ:binom1} by counting the occurrences of each  $a_i$ for $i$ from $\{1,\dots,m\}$. Similarly,~\eqref{eqn:binom2} can be derived by counting each pair $(a_i,a_j)$ for $i,j$ from $\{1,\dots,m\}$, where $i\ne j$. Thus,~\eqref{eqn:g_term1_1} becomes 
 \begin{align*}
 & 2\sum_{k=0}^m {m-2 \choose k-2}\sum_{i=1}^{m-1}\sum_{j=i+1}^m w_iw_jx_{i,t}x_{j,t} \\
 &\qquad + \sum_{k=0}^m{m-1 \choose k-1}\sum_{i =1}^m w_i^2x_{i,t}^2 \\
 &= \sum_{k=0}^m\frac{k(k-1)}{m(m-1)}{m \choose k}\left[ \sum_{i=1}^{m}\sum_{j=1}^m w_iw_jx_{i,t}x_{j,t} - \sum_{i=1}^m w_i^2x_{i,t}^2\right] \\
 &\qquad + \sum_{k=0}^m\frac{k}{m}{m \choose k}\sum_{i =1}^m w_i^2x_{i,t}^2 \\
 &\leq 2^{m-2}\left(\sum_{i=1}^{m} w_ix_{i,t} \right)^2  
+ 2^{m-1}\sum_{i =1}^m w_i^2x_{i,t}^2, \numberthis \label{eqn:g_term1}
\end{align*}
where the first term in ~\eqref{eqn:g_term1} follows since $\sum_{k=0}^m k{m \choose k} = m2^{m-1}$ and the second term follows since $\sum_{k=0}^m k^2{m \choose k} = m(m+1)2^{m-2}$. Now returning to~\eqref{eqn:g_avgd_1}, the second term $\sum_{S_1 \in \Tc_m}\sum_{S_2 \in \Tc_m}v_{S_1,t}v_{S_2,t}$ becomes
\begin{align*}
 \sum_{S_1 \in \Tc_m}&\sum_{S_2 \in \Tc_m} \left( \sum_{i\in S_1} w_ix_{i,t} \right) \left( \sum_{j\in S_2} w_jx_{j,t} \right) \\
  & = \left(\sum_{k=0}^m\sum_{S_1 \in \Tc_m^k}\sum_{i\in S_1}  w_ix_{i,t}\right)^2 \\
  & =\left(\sum_{k=0}^m{m-1 \choose k-1}\right)^2 \left(\sum_{i = 1}^m w_ix_{i,t}\right)^2 \\
 & = 2^{2m-2} \left(\sum_{i = 1}^m w_ix_{i,t}\right)^2,\numberthis \label{eqn:g_term2}
\end{align*}
where~\eqref{eqn:g_term2} follows using the same arguments as in~\eqref{eqn:g_term1}. From \eqref{eqn:g_term1} and \eqref{eqn:g_term2}, the two inner terms in \eqref{eqn:g_avgd_1} become
\begin{align*}
 \sum_{t=1}^{n_e} &\Biggl[ 2^m \sum_{S_1 \in \Tc_m} v^2_{S_1,t} -\sum_{S_1 \in \Tc_m} \sum_{S_2 \in \Tc_m}v_{S_1,t}v_{S_2,t}\Biggl] \\
  & \leq 2^{2m-1}\sum_{t=1}^{n_e}\sum_{i =1}^m w_i^2x_{i,t}^2 \\
  &\quad + \sum_{t=1}^{n_e}\Biggr[ 2^{2m-2}\left(\sum_{i=1}^{m} w_ix_{i,t} \right)^2 - 2^{2m-2} \left(\sum_{i = 1}^m w_ix_{i,s}\right)^2\Biggr] \\
  & = 2^{2m-1}\sum_{t=1}^{n_e}\sum_{i =1}^m w_i^2x_{i,t}^2  \\
&\leq 2^{2m-1}mn_eP_e,   \numberthis \label{eqn:g_terms_comb}
\end{align*}
where~\eqref{eqn:g_terms_comb} follows from the power constraint in~\eqref{eqn::peak_pwr_2}. From \eqref{eqn:g_terms_comb}, it follows that the average squared distance is, 
\begin{align*} 
  \Bar{d}^2  &\leq \frac{2^{2m-1}mn_eP_e}{{2^m \choose 2}} 
  = \frac{2^{m} mn_eP_e}{2^m-1}. \numberthis \label{eqn:avg_sqr_dst}
\end{align*}
 Since there is at least one pair of signals for which~\eqref{eqn:avg_sqr_dst} holds, we can bound $\Ps\{ \hat{S}^*_e \ne S^* | S^* = S\}$ using the average distance as opposed to the actual distance. Specifically, since the noise is Gaussian, $\Ps\{ \hat{S}^*_e \ne S^* | S^* = S\}$ constitutes the probability that the noise contribution
moves $v_S$ at least half the distance between it and the next closest possible sent signal. i.e., there exists some $S\in\Tc_m$ such that 
\begin{align*}
    \Ps\{ \hat{S}^*_e \ne S^* | S^* = S\} &\ge \Phi\left(-\frac{\Bar{d}^2}{2\s_\text{min}} \right)\\
    &\ge \Phi\left( -\sqrt{\frac{2^{m}mn_eP_e}{4\s^2_\text{min}\left(2^m-1\right)}} \right). \numberthis \label{eqn:P2_perr_cond}
\end{align*}
This accounts for only one of the $|\Tc_m|$ contributions to the total error. By removing some of the $|\Tc_m|$ combinations from the list of sendable signals, other contributions to the error can be identified.  We remove an entire codeword from the codebook, which accounts for $2^{m-1}$ of the total combinations. Since removing combinations will cause the bound on the average distance in~\eqref{eqn:avg_sqr_dst} to increase and the bound probability of error in~\eqref{eqn:P2_perr_cond} to decrease, the first $2^{m-1}$ combinations can be bounded by the probability of the $(2^{m-1})$-th combination. That is, the first $2^{m-1}$ combinations contribute at least
\begin{equation*}
2^{m-1} \Phi\left( - \sqrt{\frac{2^{m-1}(m-1)n_eP_e}{4\s^2_\text{min}\left(2^{m-1}-1\right)}} \right)
\end{equation*}
to the probability of error proportional to the remaining contributions. Another codeword can be removed, accounting for $2^{m-2}$ combinations and a bound on the contribution to the probability of error can be derived. This process is repeated by removing codewords and adding contributions until all but two codewords have been removed. We then obtain the following bound 
\begin{align*}
\Ps\{ \hat{S}^*_e \ne S^* \} &\ge \frac{1}{2^m}\sum_{i = 1}^{m-1} 2^{m-i} \Phi\left( - \sqrt{\frac{2^{m-i}(m-i)n_eP_e}{4\s^2_\text{min}\left(2^{m-i}-1\right)}} \right) \\
&= \sum_{i = 1}^{m-1} \frac{1}{2^i} \Phi\left( - \sqrt{\frac{2^{m-i}(m-i)n_eP_e}{4\s^2_\text{min}\left(2^{m-i}-1\right)}} \right). \numberthis \label{eqn:lb2_mac_only}
\end{align*}
The probability of error in~\eqref{eqn::perr1} is then bounded by
\begin{equation*} \label{eqn:bnd2_proof}
    P_{\text{err}} \ge \max_{e\in\Ec}\sum_{i = 1}^{m-1}\frac{1}{2^i}  \Phi\left( - \sqrt{\frac{2^{m-i}(m-i)n_eP_e}{4\s^2_{\text{min}}\left(2^{m-i}-1\right)}} \right).
    \end{equation*}

\end{document}